\title{Risk-Neutral Market Simulation}
\author{
    Magnus Wiese\textsuperscript{\rm 1, \rm 2},
    Phillip Murray\textsuperscript{\rm 1, \rm 3}
}
\thanks{Opinions expressed in this paper are those of the authors, and do not necessarily reflect the view of J.P. Morgan.}
\DeclareRobustCommand{\E}[0]{\mathbb{E}}
\DeclareRobustCommand{\N}[0]{\mathbb{N}}
\DeclareRobustCommand{\P}[0]{\mathbb{P}}
\DeclareRobustCommand{\Q}[0]{\mathbb{Q}}
\DeclareRobustCommand{\R}[0]{\mathbb{R}}
\DeclareRobustCommand{\X}[0]{\mathbb{X}}
\newcommand{\KL}{D_{\textrm{KL}}}
\DeclareMathOperator{\Image}{Im}
\newcommand{\pdens}{p}
\newcommand{\qdens}{q}
\renewcommand{\X}[0]{X}
\newcommand{\x}{x}
\newcommand{\y}{y}
\newtheorem{proposition}{Proposition}
\theoremstyle{remark}
\begin{document}

\maketitle

\begin{abstract}
    We develop a risk-neutral spot and equity option market simulator for a single underlying, under which the joint market process is a martingale. We leverage an efficient low-dimensional representation of the market which preserves no static arbitrage, and employ neural spline flows to simulate samples which are free from conditional drifts and are highly realistic in the sense that among all possible risk-neutral simulators, the obtained risk-neutral simulator is the closest to the historical data with respect to the Kullback-Leibler divergence. Numerical experiments demonstrate the effectiveness and highlight both drift removal and fidelity of the calibrated simulator. 
\end{abstract}

\section{Introduction}
There has been growing interest in the application of reinforcement learning methods to financial markets. In many domains, a significant difficulty is the the availability of sufficient historical data to train reinforcement learning agents on -- for example taking daily data for an equity underlying over ten years amounts to only a few thousand samples, making any algorithm prone to overfitting and lacking robustness. To address this data scarcity challenge, there has been much work in the area of using generative machine learning models to simulate realistic samples from the same distribution as the historical market data. To name a few: \cite{arribas2020sigsdes, horvath2020, cohen2021arbitrage, cohen2022estimating, cuchiero2020generative, de2021tackling, gierjatowicz2020robust, hao2020, ni2021sig, wiese2019deep, wiese2020quant}. 

In this paper, we focus on the simulation of equity option markets of a single underlying. Through an effective parametrization of the market in terms of \textit{discrete local volatilities} \cite{DLV}, we are able to guarantee the absence of static arbitrage from the market simulator. However, a major difficulty in the simulation of equity option markets is the presence of \textit{statistical arbitrage} -- that is, trading strategies which, starting from an empty initial portfolio, have positive expectation. This poses a particular challenge to the robustness of reinforcement agents trained on simulated data, for the purposes of hedging and risk management of a derivatives portfolio. In the presence of statistical arbitrage, an agent's proposed action will be not only a hedge, but also an additional component that is independent of the portolio it is hedging and purely seeking profitable opportunities in the market, which are particularly sensitive to errors in estimating the drift (conditional expectation) of the tradeable instruments.

It is therefore desirable to build models which are free from such statistical arbitrage opportunities. In the absence of trading constraints, such as transaction costs, removing statistical arbitrage is equivalent to simulating from a \textit{risk-neutral measure} -- a distribution under which the (discounted) prices process of all traded instruments are martingales: $\E[X_{t+1}|\mathcal{F}_t ]=X_t$. Simulating from such a martingale measure is a long-standing challenge in quantitative finance, with the classic approach being to specify a suitable parametric model for the underlying under the risk-neutral measure and calibrate parameters to historical data. These models are, however, often motivated more by tractability than expressiveness, and are often limited to single underlyings, not to the joint distributions of an underlying and options written on it. 

Therefore, we take a different approach in this paper and demonstrate how machine learning methods can be used for direct simulation under risk-neutral measures. We build on previous works on market simulation via flow-based models, first introduced in \cite{wiese2021multiasset}, and on methods to ``remove the drift'' in a market by solving a utility optimization problem \cite{buehler2022deep}, in order to create a new data-driven method for direct simulation from a risk-neutral measure, given historical samples from the physical measure. 

The use case and advantages of this approach includes for providing training data for a \textit{Deep Hedging} algorithm \cite{DH} to construct hedging strategies for portfolios of derivatives, by trading in hedging intruments consisting of an underlying and vanilla options. By training the algorithm with a risk-neutral simulator, we can remove the statistical arbitrage component, leaving a hedge that is robust against estimation errors of the drift \cite{buehler2022deep}. 

\begin{figure*}
    \centering
    \includegraphics[width=\textwidth]{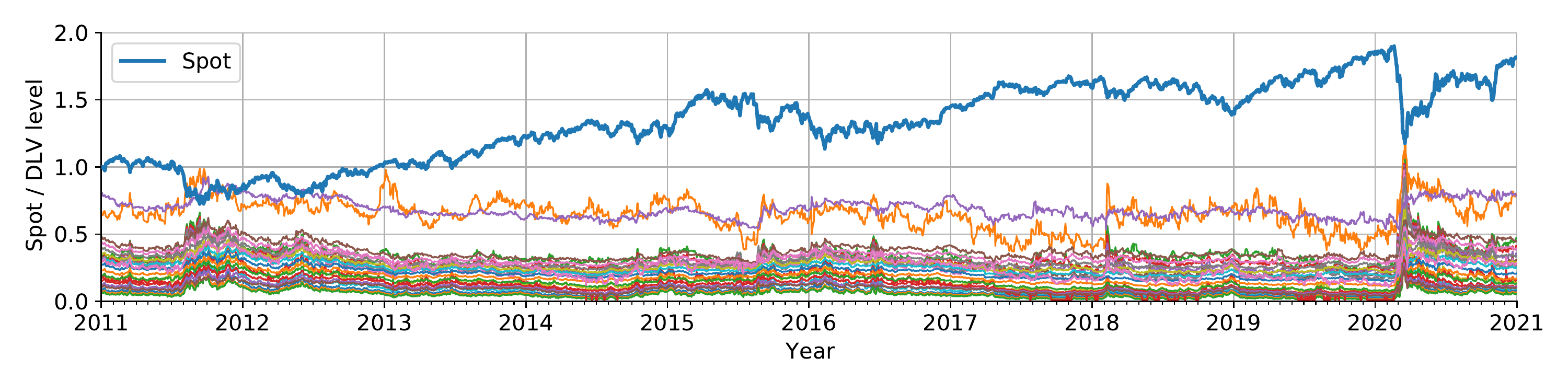}
    \caption{Normalized Eurostoxx 50 spot level and DLVs for maturities $\mathcal{T} = (60, 120) $ (quoted in business days) and relative strikes $\mathcal{K}=(0.7, 0.75, \dots, 1.0, \dots, 1.25, 1.3)$.}
    \label{fig:spot_dlvs}
\end{figure*}

We emphasise that the approach presented in this paper is a broad framework to construct a risk-neutral market simulator from real equity option market data. Our approach is not limited to the application of normalizing flows. Normalizing flows are solely used for the approximation of a conditional generative density.

\section{Background}
In the following subsections, we introduce the three main building blocks which are required to be able to construct a risk-neutral market simulator. We begin by introducing neural spline flows for approximating conditional generative densities. Afterwards, we review the construction of the equity option market simulator introduced in \cite{wiese2021multiasset} and derive the objective function. In the last subsection, we introduce entropy-based risk-neutral densities \cite{buehler2022deep}. 

\subsection{Neural spline flows for time series modeling}
Normalizing flows \cite{papamakarios2019normalizing} are \textit{diffeomorphisms}\footnote{Differentiable and bijective functions.} that are constructed by the means of neural networks. Being bijective allows the approximation of densities through the application of the density transformation theorem. In this paper, we will consider neural splines flows \cite{durkan2019neural} which are a specific subclass of normalizing flows that recently got popular due to the ability to universally approximate densities and have an analytically tractable inverse. 

Recently, neural splines with linear interpolation were proposed to approximate generative densities for time series data \cite{wiese2021multiasset}. For completeness, we illustrate the construction of neural splines with linear interpolation for time series below. Assume that $(X_{t})_{t \in \N} \sim p $ is a discrete-time Markov process taking, without loss of generality, values in $[0, 1]^d$. Furthermore, let $T_\eta=(T_{\eta, 1}, \dots, T_{\eta, d}): [0, 1]^d \times [0, 1]^d \to [0, 1]^d$ be a differentiable triangular increasing map in the second component (cf. \cite{bogachev2005triangular}) parametrised by $\eta \in H$. Then the \emph{generated process} can be defined for $j=1, \dots, d$ as 
\begin{equation*}
    X_{t+1, j} = T_{\eta, j}(U_j; X_{t}, X_{t+1, :(j-1)}) \sim p_\eta 
\end{equation*}
where $U = (U_1, \dots, U_d) \sim \mathcal{U}([0, 1]^d)$ is a uniformly distributed random variable on the unit hypercube. Since the function $T_\eta$ is assumed to be bijective, the density transformation theorem can be applied for $j=1, \dots, d$ to obtain the conditional density induced by $T$
\begin{align*}
    &p_\eta(x_{t+1, j}| x_t, x_{t+1, :j-1}) \\&= p(u_{t+1, j})| {T_{\eta, j}}'(u_{t+1, j}; x_t, x_{t+1, :j-1})|^{-1}
\end{align*}
where $u_{t+1, j} =T_{\eta, j}^{-1}(x_{t+1, :j-1}; x_{t}, x_{t+1, :j-1})$ is the latent variable conditional on the state $x_{t}$ and the next day partial state $x_{t+1, :j-1}$, and ${T_{\eta, j}}'$ denotes the derivative of $T_{\eta, j}$ with respect to $u$. 

To make the derivative analytically tractable while maintaining monotonicity in $T_{\eta, j}, j =1, \dots, d$ different interpolation schemes have been proposed \cite{durkan2019cubic, durkan2019neural}. In this paper, we use linear interpolation and define the conditional density for a fixed number of bins $B \in \N$ as  
\begin{equation*}
    p_\eta(x_{t+1, j} | x_t, x_{t+1, :(j-1)}) = \prod_{k=1}^B p_k^{c_k} 
\end{equation*}
where $(p_k)_{k}$ is non-negative, normalized, and $c_k$ is defined for $k = 1, \dots, B$ as
\begin{equation*}
    c_k = \left\lbrace
    \begin{array}{ll}
        1, \quad \textrm{if} \ x_{t+1} \in [\frac{k-1}{B}, \frac{k}{B}]  \\
        0, \quad \textrm{else}
    \end{array}
    \right. 
\end{equation*}
Neural splines approximate the conditional density $(p_k)_k$ with with a neural network taking values in $\R^B$. The output is then transformed via the softmax function and normalized to obtain a density. 
\begin{figure*}
    \centering
    \includegraphics[width=\textwidth]{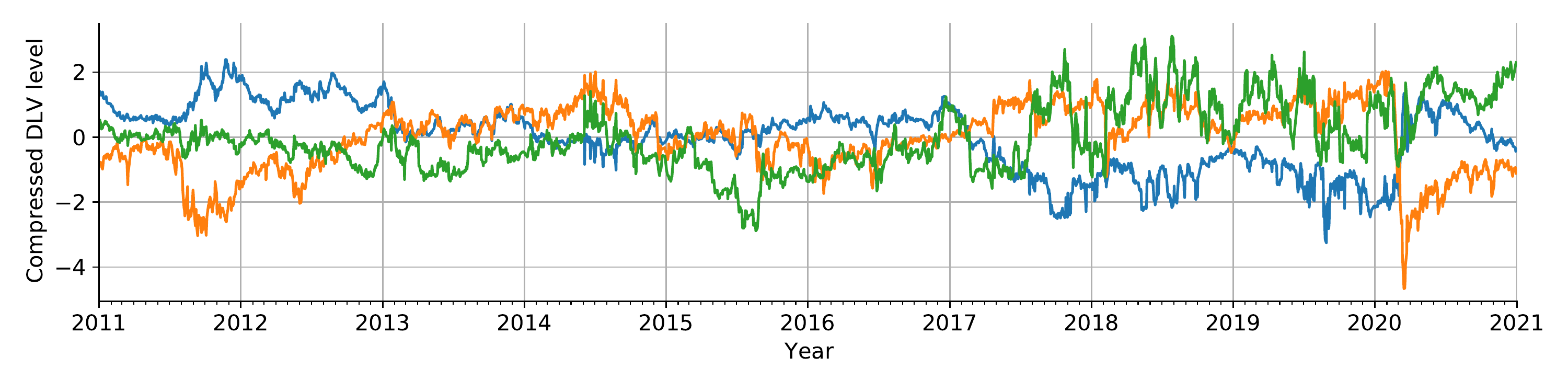}
    \caption{Encoded DLVs obtained through a calibrated autoencoder; $\mathcal{T} = (60, 120) $ (quoted in business days), $\mathcal{K}=(0.7, 0.75, \dots, 1.0, \dots, 1.25, 1.3)$.}
    \label{fig:compressed_dlvs}
\end{figure*}

Due to the tractability of the conditional density the flow $T_{\eta}$ can be approximated using \emph{conditional} Kullback-Leibler (KL) divergence, or equivalently, the negative log-likelihood (NLL)
\begin{align}
    &\KL(p\|p_\eta)
    \notag 
    = \E_{p}\left(\E_{p}\left(\ln \dfrac{p(X_{t+1} | X_t=\hat{X}_t)}{p_\eta(X_{t+1} | X_t=\hat{X}_t)} \right) \big | \hat{X}_t \right) 
    \\
    \label{eq:nll_flows}
    &= -\E_{p}\left(\E_{p}\left(\ln p_\eta(X_{t+1} | X_t=\hat{X}_t) \big | \hat{X}_t\right)\right) + const
\end{align}
Note that here we use a conditional version of the KL-divergence since our objective is to approximate the conditional density of the market process. In practice, only a single time series $ (x_t)_{t=1}^{T} \sim p$ may be available. In this case, \eqref{eq:nll_flows} will be approximated via Monte Carlo: 
\begin{equation*}
    \hat{J}(\eta) = (T-1)^{-1} \sum_{t=1}^{T-1} -\ln p_\eta(x_{t+1} | x_t)
\end{equation*}

\subsection{Equity option market simulation}
In this subsection, we briefly outline the construction of the market simulator. For an in-depth introduction we direct the reader to \cite{wiese2021multiasset}. 

Let $(\Omega, \mathcal{F} = (\mathcal{F}_t)_{t \in \N}, \mathbb{P})$ be a filtered probability space. We refer to the probability measure $\P$ as the \emph{physical probability measure} and by $p$ we refer to the assosciated \emph{physical density}. Denote by $X=(S, C): \Omega \times \N \to \R_{>0}\times\R^{mn}_{>0}$ an adapated discrete-time Markov process, i.e. we assume for any $s, t \in \N, t > s $ that 
\begin{equation*}
    p(x_{t} | \mathcal{F}_s) = p(x_{t} | x_s) 
\end{equation*}
The first component $S=(S_t)_{t \in \N}$ represents the spot price of the underlying which we assume to take the form $S_{t+1} = S_t (1 + R_{t+1})$ where $R_{t+1}$ is independent from $S_t$. The second component $C=(C_t)_{t \in \N}$ represents an $mn$-dimensional grid of call prices defined on a floating grid of \emph{time to maturities} $\mathcal{T} = (\tau_1, \dots, \tau_m)$ and \emph{relative strikes} $\mathcal{K} = (k_1, \dots, k_n)$ defined around the unit forward. Thus, the call price grid at any time $t \in \N$ is defined for $i = 1, \dots, m$ and $j = 1, \dots n$ so that $C_{t, (i-1)*n + j}$ is the price of the option with payoff $( S_{t +\tau_i} / S_t - k_j)^+$. In what follows we refer to $X$ as the \emph{market process}. 

\subsubsection{Discrete local volatilities}
In order to guarantee the absence of \emph{static arbitrage} (riskless profits) realized grid prices $c_t \in \R^{mn}_{> 0}$ need to satisfy ordering constraints such as non-negativity, monotonicity in time and convexity in strike (cf. \cite{gatheral2014arbitrage}). We therefore represent the grid prices by \emph{discrete local volatilities} (\emph{DLVs}) \cite{DLV} as an arbitrage-free parametrisation of the considered price grid.\footnote{No static arbitrage is a hard requirement: if there exists static arbitrage under $\P$, then no equivalent risk-neutral measure exists.} DLVs take a local volatility-inspired parametrisation and can be seen as a discrete version of Dupire's famous local volatility \cite{dupire1994pricing}. Most importantly, the mapping from non-negative DLVs to the call price grid is a bijective map $\Phi: \R^{mn}_{>0} \to \R^{mn}$ which will be necessary to construct the manifold flow \eqref{eq:manifold_flow}. We denote the unique corresponding grid of DLVs by $\Sigma_t = \Phi^{-1}(C_t)$. 

\subsubsection{Interpolating the call price grid}
The market process $X$ was constructed using floating grid prices to obtain a stationary representation of the call prices. In order to obtain real-world prices with fixed expiries and strikes we interpolate the floating price grid using a bilinear interpolation\footnote{Note that any interpolation scheme that maintains monotonicity in time and convexity in strike will not introduce static arbitrage and therefore is permitted.} in the prices. For any $t \in \N$ denote the bilinear interpolated grid at maturity $\tau \in [\min(\mathcal{T}), \max(\mathcal{T})]$ and relative strike $k \in [\min(\mathcal{K}), \max(\mathcal{K})]$ as $\tilde{C}_{t}(\tau, k)$. Then the price of a call bought on day $t$ at maturity-strike $(\tau, k) \in \mathcal{T}\times\mathcal{K}$ has a price on day $t+1 \in \N$ of
\begin{equation*}
    C_{t+1, \tau, k}(S_{t+1}, S_t) = S_{t+1} / S_t \tilde{C}_{t+1}(\tau-1, k S_t / S_{t+1})
\end{equation*}
Note that a spot move $ s_{t+1} / s_{t} $ greater than $k / \min(\mathcal{K})$ or smaller than $ k / \max(\mathcal{K})$ will not make the interpolation feasible due to the limited relative strike range. We treat this problem by introducing boundary relative strikes and maturities and interpolate to the option's intrinsic value \cite{DLV}. Using the option's interpolated value we can compute the difference in price of the call bought at time $t$ at with maturity-strike pair $(\tau, k)$ as 
\begin{equation*}
    dC_{t+1} = C_{t+1, \tau, k}(S_{t+1}, S_t) - \tilde{C}_{t}(\tau, k)
\end{equation*}

\subsubsection{Manifold assumption}
We assume grid prices $c_t \in \R^{mn}_{>0}$ to have support on a low-dimensional manifold. More precisely, we assume that there exists a \emph{latent dimension} $l \ll mn$, an injective map $\psi: \R^{l} \to \R^{mn}_{>0}$ such that for any price grid $c_t\in \R^{mn}_{>0}$ with positive density there exists a unique representation $\sigma_t $ such that $ c_t = \psi(\sigma_t)$ holds $\P$-almost surely. The manifold assumption is justified by the high observed correlation in DLV levels as well as their returns (cf. \cite{wiese2021multiasset}). 

Under the call price manifold assumption we can construct for any states $x_{t}, x_{t+1} \in \Image(\psi)$ in the image set of $\psi$ the manifold flow \cite{brehmer2020flows, gemici2016normalizing}
\begin{equation}
    \label{eq:manifold_flow}
    p(x_{t+1} | x_t) = p(y_{t+1}| y_t) f(y_{t+1}) 
\end{equation}
where $y_{v} = (s_{v}, \sigma_v=\psi^{-1}(c_{v})), v \in \lbrace t, t+1\rbrace$ is the \emph{compressed state representation} and 
\begin{equation*}
    f(y_{t+1}) = \left(\det J_\psi(\sigma_{t+1})^T J_\psi(\sigma_{t+1})\right)^{-\frac{1}{2}}
\end{equation*}
where $J_\psi$ is the Jacobian of $\psi$. 
By leveraging \eqref{eq:manifold_flow} one can show that minimzing the NLL on the level of the spot and call prices \eqref{eq:nll_flows} is equivalent to minimizing the NLL on the compressed state space 
\begin{align*}
    J(\eta) &= -\E_{p}\left(\E_{p}\left(\ln p_\eta(X_{t+1} | X_t =\hat{X}_{t}) \right) | \hat{X}_{t}\right) 
    \\
    &= -\E_{p}\left(\E_{p}\left(\ln p_\eta(Y_{t+1} | Y_t = \hat{Y}_t) \right) | \hat{Y}_{t}\right) + const 
\end{align*}
This motivates our choice to approximate market simulators in the compressed space. 

\subsection{Entropy-based risk-neutral densities}
A central concept in mathematical finance is that of a \emph{risk-neutral probability measure}, $\Q$ which is equivalent to $\P$, under which the price of all tradeable instruments is the discounted expectation of the instrument's payoff. Throughout this article, we will assume for simplicity that the risk-free rate is zero which implies that $X$ is a martingale under $\Q$. It is well known that if the physical measure is free from arbitrage then such a risk-neutral measure must exist, although it will not in general be unique.

To construct such a measure, recall that $X$ represents instruments which are liquidly traded in a market. Consider a \textit{trading action} $a_t \equiv a(X_t) \in \R^{1+mn}$, and a \textit{gains} of taking this action over a single timestep of $G(X_{t+1}, X_t) = a_t \cdot (X_{t+1} - X_t)$. Consider the strictly concave, strictly increasing utility function $u(x) = ( 1 - e^{-\lambda x})/\lambda$ -- known as the exponential utility, or entropy. Let $a^*$ be a solution of the optimization problem 

\begin{equation}
    \label{eq:utility_optimization}
    \sup_a \E \left[ u(G(X_{t+1}, x_t)) \right] 
\end{equation}
and let $G^*$ be the associated optimal gains. Then we can define a measure $\Q$ via the conditional density, 
\begin{align}
    \label{eq:memm}
    \qdens (X_{t+1}| X_t) &= w(X_{t+1}|X_t) \ \pdens (X_{t+1}|X_t) \\ 
    w(X_{t+1}|X_t) &= \frac{e^{-\lambda G^*(X_{t+1}, X_t)}}{\E \left[e^{-\lambda G^*(X_{t+1}, X_t)} | X_t \right]} 
\end{align}

Using this density, we have the following result.

\begin{proposition}
The conditional density $\qdens$ is an equivalent martingale density, i.e.  $\E_\qdens [X_{t+1}| X_t ] = X_t$. Furthermore, $\qdens$ minimizes the KL divergence to $\pdens$ over all equivalent martingale densities, and $\Q$ is hence called the \textit{minimal entropy martingale measure} \cite{FRITTELLI}.
\end{proposition}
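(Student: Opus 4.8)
The plan is to treat the two assertions separately: (i) $\qdens$ defines an equivalent martingale density, and (ii) among all such densities it is the unique KL-minimizer.

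For (i), the lever is the first-order (stationarity) condition of \eqref{eq:utility_optimization}. Assuming $a^*$ is an interior maximizer and that differentiation may be exchanged with the conditional expectation, the directional derivative of $a \mapsto \E[u(a\cdot(X_{t+1}-X_t))\mid X_t]$ vanishes at $a^*$ in every direction $b \in \R^{1+mn}$. Since $u'(x) = e^{-\lambda x}$ and $\partial_a\, a\cdot(X_{t+1}-X_t) = X_{t+1}-X_t$, this reads $\E[e^{-\lambda G^*(X_{t+1},X_t)}(X_{t+1}-X_t)\mid X_t] = 0$. Dividing by the conditional normalizer $\E[e^{-\lambda G^*}\mid X_t]$ — which is exactly the denominator of $w$ — yields $\E_\qdens[X_{t+1}-X_t\mid X_t] = \E_\pdens[w(X_{t+1}\mid X_t)(X_{t+1}-X_t)\mid X_t] = 0$, i.e. $\E_\qdens[X_{t+1}\mid X_t] = X_t$. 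Equivalence of $\Q$ and $\P$ is immediate because $w > 0$ $\P$-a.s. (strictly positive exponential divided by a finite, strictly positive normalizer), so $\qdens$ and $\pdens$ are conditionally mutually absolutely continuous.

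For (ii), fix any equivalent martingale density $\tilde\qdens$ with $\KL(\tilde\qdens\|\pdens) < \infty$, and write $Z = \tilde\qdens/\pdens$ and $w = \qdens/\pdens$, so that $\ln w = -\lambda G^* - \ln \E_\pdens[e^{-\lambda G^*}\mid X_t]$. Splitting the log-likelihood ratio, $\KL(\tilde\qdens\|\pdens) = \E_{\tilde\qdens}[\ln Z\mid X_t] = \E_{\tilde\qdens}[\ln w\mid X_t] + \KL(\tilde\qdens\|\qdens) \ge \E_{\tilde\qdens}[\ln w\mid X_t]$, which reduces the problem to computing $\E_{\tilde\qdens}[\ln w\mid X_t]$. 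Substituting the expression for $\ln w$ and using $G^* = a^*\cdot(X_{t+1}-X_t)$ together with the martingale property of $\tilde\qdens$ gives $\E_{\tilde\qdens}[G^*\mid X_t] = 0$, hence $\E_{\tilde\qdens}[\ln w\mid X_t] = -\ln\E_\pdens[e^{-\lambda G^*}\mid X_t]$, a quantity that does not depend on $\tilde\qdens$. Finally, taking $\tilde\qdens = \qdens$ — legitimate since $\qdens$ is a martingale density by part (i) — the same computation gives $\KL(\qdens\|\pdens) = -\ln\E_\pdens[e^{-\lambda G^*}\mid X_t]$, so the lower bound is attained, and strict positivity of $\KL(\tilde\qdens\|\qdens)$ for $\tilde\qdens \neq \qdens$ gives uniqueness.

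The main obstacle I expect is not the algebra but the measure-theoretic regularity behind the two interchanges: differentiating under the expectation in the stationarity condition (needs local domination of $e^{-\lambda G^*}\|X_{t+1}-X_t\|$) and, more delicately, the identity $\E_{\tilde\qdens}[G^*\mid X_t] = 0$, which presupposes $\qdens$-integrability of $G^*$ and is not automatic for an arbitrary martingale density. The standard fix, which I would invoke, is to restrict attention to martingale densities of finite relative entropy and to justify $\E_{\tilde\qdens}[G^*\mid X_t] = 0$ via a truncation/uniform-integrability argument as in the minimal-entropy literature \cite{FRITTELLI}. Existence of $a^*$ (coercivity from strict concavity of $u$, after quotienting directions along which $X_{t+1}-X_t$ vanishes $\P$-a.s., so that $G^*$ is well defined) should be recorded but is routine.
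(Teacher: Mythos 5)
Your argument is correct, and it is worth noting that the paper itself does not actually prove this proposition: it states the result with a citation to \cite{FRITTELLI} (and \cite{buehler2022deep}), offering only the informal remarks that $w>0$ gives equivalence and that $w$ is normalized and decreasing in $G^*$. What you supply is the standard self-contained one-period (conditional) argument: the first-order condition $\E_\pdens\bigl[e^{-\lambda G^*}(X_{t+1}-X_t)\mid X_t\bigr]=0$ divided by the normalizer gives the martingale property, and the Pythagorean-type decomposition $\KL(\tilde\qdens\|\pdens)=\KL(\tilde\qdens\|\qdens)+\E_{\tilde\qdens}[\ln w\mid X_t]$ together with $\E_{\tilde\qdens}[G^*\mid X_t]=0$ for any competing martingale density shows that the second term is a constant, so $\qdens$ attains the minimum. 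This is exactly the mechanism underlying the duality between exponential utility maximization and entropy minimization in the cited literature, and your explicit flagging of the regularity issues (interchange of differentiation and conditional expectation, attainment of $a^*$ after quotienting degenerate directions, and $\tilde\qdens$-integrability of $G^*$, handled by restricting to finite-entropy martingale densities and truncation) identifies precisely the points where the general continuous-time result in \cite{FRITTELLI} requires care; in the finite-dimensional single-step setting of the paper these are mild. The one cosmetic remark is that uniqueness is a by-product of your strict inequality $\KL(\tilde\qdens\|\qdens)>0$ for $\tilde\qdens\neq\qdens$, which goes slightly beyond what the proposition claims but is harmless.
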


The measure $\Q$ is one example of a broader class of utility-based risk-neutral measures, see \cite{buehler2022deep} for further details. Due to the positivity of the exponential function, $w > 0$, ensuring that $\P$ and $\Q$ are equivalent, and the concavity of $u$ means that $w$ is a decreasing function of $G^*$. Furthermore, the change of measure is normalized, i.e. $\E(w(X_{t+1} | X_t=x_t)) = 1$, ensuring that $q$. The financial interpretation is that by reweighting future states of the world proportional to their marginal utility under the physical measure, outcomes which were profitable under $\P$ become less likely under $\Q$, resulting in a measure in which all profit opportunities have been removed and all instruments must have zero drift.  

\begin{figure*}[htp]
    \centering 
    \includegraphics[width=\textwidth]{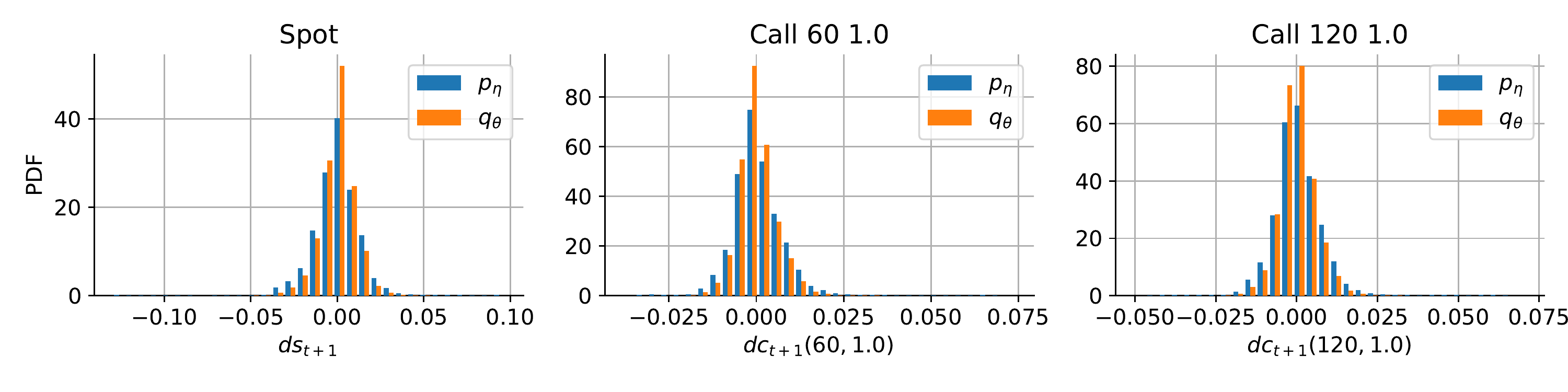}
    \caption{Empirical histograms of the spot price, 60D and 120D at-the-money call price sampled under the approximated real-world density $p_\eta$ (blue) and risk-neutral density $q_\theta$ (orange).}
    \label{fig:spot_call_density}
\end{figure*}

\section{Problem formulation}
Our aim in this paper is to calibrate a \emph{model density} $\qdens_\theta$ that is close to the ground-truth risk-neutral density $\qdens$ with respect to the KL divergence, by using only samples from the physical density $\pdens$. Note that our framework is very general, and applies to any density estimator, and in this paper we focus specifically on neural spline flows. Ignoring constants for clarity, our objective is therefore the following:  

\begin{equation}
    J(\theta) = - \ \mathbb{E}_\qdens \left(\mathbb{E}_\qdens \left( \ln \qdens_\theta(\X_{t+1} | X_t = \hat{X}_t) \right) | \hat{X}_{t}\right) 
\end{equation}

Note that we minimize this objective by minimizing the inner conditional NLL for all states $\hat{X}_t$, so the outer expectation can be taken over the equivalent physical measure. Furthermore, since we can neither evaluate nor sample from the target density $\qdens$, we apply the change of measure transform above, so that this objective is equivalent to 

\begin{equation}
    \label{eq:calibration_problem_*}
    \tag{$\ast$}
    J(\theta) = - \E_p \left( \E_p \left( w(X_{t+1}|X_t=\hat{X}_t)\ln  \qdens_\theta(X_{t+1} |  X_t = \hat{X}_t) \right) | \hat{X}_{t}\right) 
\end{equation}

Due to the assumption that the decoder $\psi: \R^l \to \R^{mn}_{>0}$ is injective the minimization problem \eqref{eq:calibration_problem_*} is equivalent to solving following the calibration problem on the reduced space \eqref{eq:manifold_flow}. Further note that for any $x_{t}, x_{t+1} \in \Image(\psi)$ the change of measure $w(x_{t+1} | x_t) = \qdens(x_{t+1} | x_t) / \pdens(x_{t+1} | x_t)$ on the observed space coincides with the change of measure on the reduced space:
\begin{align*}
    w(\x_{t+1} | \x_t) 
    = \dfrac{\qdens(y_{t+1} | y_t) f(\sigma_{t+1})}{\pdens(y_{t+1} | y_t) f(\sigma_{t+1})} 
    = w(\y_{t+1} | \y_t) 
\end{align*}
Hence, we may write our objective as a weighted NLL on the reduced space:
\begin{equation}
    \label{eq:calibration_problem_**}
    \tag{$\ast\ast$}
    J(\theta) = - \E_p \left(\E_p\left(  w(Y_{t+1}|Y_t= \hat{Y}_t) \ln \qdens_\theta(Y_{t+1} | Y_t = \hat{Y}_t) \right) | \hat{Y}_{t}\right) 
\end{equation}

In the real world, both calibration problems \eqref{eq:calibration_problem_*} and \eqref{eq:calibration_problem_**} are not feasible. We only observe a single realization $(\tilde{\x}_t)_{t \in I}$ sampled from the physical density $\pdens$ over some time horizon $I \coloneqq \lbrace 0, \dots, T \rbrace$. Furthermore, we do not observe the call price function $\psi$. We instead only have access to the \emph{empirical physical density}  
    $\tilde{\pdens}(\x_{t+1} | \x_{t}) = \sum_{t\in \tilde{I}} \delta_{(\tilde{\x}_{t+1}, \tilde{\x}_{t})}((\x_{t+1} , \x_t)) $
where $\delta$ denotes the Dirac delta function, and the corresponding \emph{empirical reduced physical density}. This in particular makes learning the change of measure weights difficult when done directly from the empirical densities, since the optmization would be done on only single samples from each condition. To overcome these challenges, we use the following data augmentation method. 

\subsection{Augmentation with the physical market simulator}
To get good estimates of the weights $w(x_{t+1}|x_t)$ we augment the empirical distribution by first training a flow model density $p_\eta$ that minimizes the KL divergence to the physical density. We then train the model risk-neutral density to target the weighted model density. This is justified by the following proposition.   

\begin{proposition}
Let $\pdens_\eta$ be a model density for the physical probability measure, such that $\KL(\pdens \| \pdens_\eta) = 0$ and let $\qdens_\eta = w \pdens_\eta$. Let $T_\theta$ be a Markovian market simulator such that the induced density $\qdens_\theta$ satisifies $\KL(\qdens_\eta \| \qdens_\theta) = 0$. Then $\qdens_\theta = \qdens$ and the market simulator is risk-neutral.  
\end{proposition}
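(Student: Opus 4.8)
The plan is to chain together three equalities of conditional densities, each extracted from the vanishing of a conditional KL divergence, and then to invoke the earlier proposition that characterises $\qdens$ as a martingale density. First I would record the elementary fact underlying everything: for probability densities $\mu,\nu$ one has $\KL(\mu\|\nu)\ge 0$, with equality if and only if $\mu=\nu$ (Gibbs' inequality). The divergences in the statement are the conditional ones, i.e. $\KL(\pdens\|\pdens_\eta)=\E_{\pdens}\big(\KL(\pdens(\cdot\,|\,X_t)\,\|\,\pdens_\eta(\cdot\,|\,X_t))\big)$, an expectation of a non-negative integrand; it vanishes precisely when $\pdens_\eta(\cdot\,|\,x_t)=\pdens(\cdot\,|\,x_t)$ for $\pdens$-almost every $x_t$. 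Applied to the hypothesis $\KL(\pdens\|\pdens_\eta)=0$, this gives $\pdens_\eta=\pdens$ as conditional densities, $\pdens$-a.s.

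Next I would substitute this identity into the definition $\qdens_\eta:=w\,\pdens_\eta$, where $w$ is the fixed change-of-measure kernel of \eqref{eq:memm}. One obtains $\qdens_\eta(x_{t+1}|x_t)=w(x_{t+1}|x_t)\,\pdens_\eta(x_{t+1}|x_t)=w(x_{t+1}|x_t)\,\pdens(x_{t+1}|x_t)=\qdens(x_{t+1}|x_t)$, so $\qdens_\eta=\qdens$. Since $w>0$ (positivity of the exponential), $\Q$ and $\P$ are equivalent, so this equality holds $\qdens$-a.s.\ as well. Because $T_\theta$ is Markovian, its induced density is a one-step, state-dependent transition kernel $\qdens_\theta(x_{t+1}|x_t)$, of the same form as $\qdens$, so $\KL(\qdens_\eta\|\qdens_\theta)$ is again a conditional divergence; applying Gibbs' inequality a second time to $\KL(\qdens_\eta\|\qdens_\theta)=0$ yields $\qdens_\theta=\qdens_\eta=\qdens$, $\qdens_\eta$-a.s.\ and hence $\qdens$- and $\pdens$-a.s.

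Finally, by the earlier proposition $\qdens$ is an equivalent martingale density, $\E_{\qdens}[X_{t+1}\,|\,X_t]=X_t$; since $\qdens_\theta=\qdens$, the same martingale property holds under $\qdens_\theta$, so the simulator $T_\theta$ is risk-neutral, which is the assertion. The only point needing care is bookkeeping of null sets: each equality of conditional densities holds merely almost surely, and one must verify that the a.s.\ statements transfer between the measures $\P$, $\Q_\eta$ and $\Q_\theta$ — exactly where the equivalence $w>0$ is used — so that the chain $\pdens_\eta=\pdens\Rightarrow\qdens_\eta=\qdens\Rightarrow\qdens_\theta=\qdens$ is valid on a common full-measure set. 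Modulo this routine measure-theoretic hygiene, the argument is a short formal computation; I do not expect any genuinely hard step.
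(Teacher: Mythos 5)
Your proposal is correct and follows essentially the same route as the paper: both arguments chain $\qdens_\theta=\qdens_\eta=w\,\pdens_\eta=w\,\pdens=\qdens$ from the vanishing KL divergences and then invoke the martingale property of $\qdens$ together with the Markov assumption. Your additional bookkeeping about null sets and the role of $w>0$ is more careful than the paper's one-line proof but does not change the argument.
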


\begin{proof}
Since zero KL divergence implies (almost sure) equality of the distributions, we have $\qdens_\theta = \qdens_\eta = w \pdens_\eta = w \pdens = \qdens$. Hence for all $t$, we must have $\E_{\qdens_\theta} [X_{t+1}|X_t] = X_t$, which from the Markov assumption, implies that the simulator is risk-neutral. 
\end{proof}

Therefore, we may train the risk-neutral market simulator on samples from the physical simulator via the NLL-based objective:
\begin{align*}
    J(\theta )= - \E_{\pdens_\eta} \left( w(\X_{t+1}| \x_t) \ln \qdens_\theta(\X_{t+1} | {\x}_t)\right) 
\end{align*}

\paragraph{Change of measure approximation}
We further use samples from the physical simulator to approximate the change of measure weights by solving \eqref{eq:utility_optimization}. This is equivalent to the convex optimization
\begin{equation*}
    L(a) = \mathbb{E} (\exp(-a \cdot dX_{t+1}) | X_t=x_t)
\end{equation*}
where $ dX_{t+1} = X_{t+1} - X_{t}$ denotes the change in the instrument's value. The Hessian is given as 
\begin{equation*}
    \nabla_{a}^2 L(a) = \lambda^2\mathbb{E}(\exp(-\lambda a \cdot dX_{t+1}) dX_{t+1} \otimes dX_{t+1}| X_t=x_t) 
\end{equation*}
Practically, the utility loss function is approximated via MC: 
\begin{equation}
    \label{eq:optimal_actions}
    \hat{L}(a) = N^{-1}\sum_{j=1}^N\exp(-\lambda a \cdot dx_{t+1}^{(j)}) 
\end{equation}

\begin{equation*}
    \nabla_{a}^2 \hat{L}(a) = \lambda^2 N^{-1}\sum_{j=1}^N \exp(-\lambda a \cdot dx_{t+1}^{(j)}) dx_{t+1}^{(j)} \otimes dx_{t+1}^{(j)} 
\end{equation*}
where $\lbrace x_{t+1}^{(j)}\rbrace_{j=1}^N \sim \pdens_\eta(\cdot | x_t)$ is a sample generated using the physical market simulator. 

The weights $w(x_{t+1}|x_t)$ are them computed directly using the converged $a^*$. An example of the weights obtained across different samples is shown in \autoref{fig:measure_change}. The distribution of the weights illustrated here indicates another clear advantage of the approach: sampling efficiency. Since many of the weights will be less than 1, by applying the change of measure on the conditional density rather than the joint of the entire path, we avoid degenerate weights, and therefore avoid using sample paths from $\pdens$ with very low density under $\qdens$.  

Putting these steps together, we summarize the approach in \autoref{alg:algo}. 

\begin{algorithm}[htp]
    \caption{Risk-neutral simulator training}\label{alg:algo}
    \begin{algorithmic}
    \REQUIRE {Samples $x_t$, $t=1, \ldots, T$ from historical data, trained physical simulator $\pdens_\eta$}
    \FOR {$t=1, \ldots, T$}
        \STATE {Generate $N$ samples from $\pdens_\eta(\cdot |x_t)$}
        \STATE {Solve \eqref{eq:optimal_actions} via convex optimization}
        \STATE {Calculate $w(x_{t+1}^{(j)}|x_t)$ for $j=1,\ldots,N$}
    \ENDFOR
    \STATE {Train risk-neutral simulator on $NT$ samples via the loss
            $$ \hat{J}(\theta) = - \frac{1}{NT} \sum_{t=1}^T \sum_{j=1}^N w(x_{t+1}^{(j)} |x_t) \ln \qdens_\theta (x_{t+1}^{(j)} |x_t) $$}
    \end{algorithmic}
\end{algorithm}

\section{Numerical results}
In this section, we demonstrate the efficacy of our flow-based risk-neutral market simulator to generate samples which are martingales but nonetheless demonstrate similar properties to the physical distribution. We consider the Eurostoxx 50 from January 2011 to December 2020 for a total of 2543 business days, and simulate spot and DLVs / options defined at relative strikes $\mathcal{K}=(0.7, 0.75, \dots, 1.0, \dots, 1.25, 1.3)$ and maturities $\mathcal{T} = (60, 120)$ days. The corresponding set of DLVs as well as the normalized Eurostoxx 50 spot price are displayed in \autoref{fig:spot_dlvs}. For clarity, we then simulate prices of three tradeable instruments: spot and the at-the-money call options with 60D and 120D expiry. 

\paragraph{Approximating the simulator under $\mathbb{P}$}
The real-world market simulator was constructed by following \cite{wiese2021multiasset}. We first approximated the autoencoder to reduce the dimensionality of the 26-dimensional grid of DLVs to a three-dimensional representation displayed in \autoref{fig:compressed_dlvs}. Afterwards, the flow was calibrated on the compressed representation using four neural networks (one for each dimension of the compressed state and one for the spot process). Each network had three layers, 64 hidden dimensions and a 64-dimensional output layer. To avoid overfitting we retained 20\% of the samples in the test set and applied early stopping \cite{caruana2001overfitting, prechelt1998early}. 

\begin{figure}[htp]
    \includegraphics[width=0.5\textwidth]{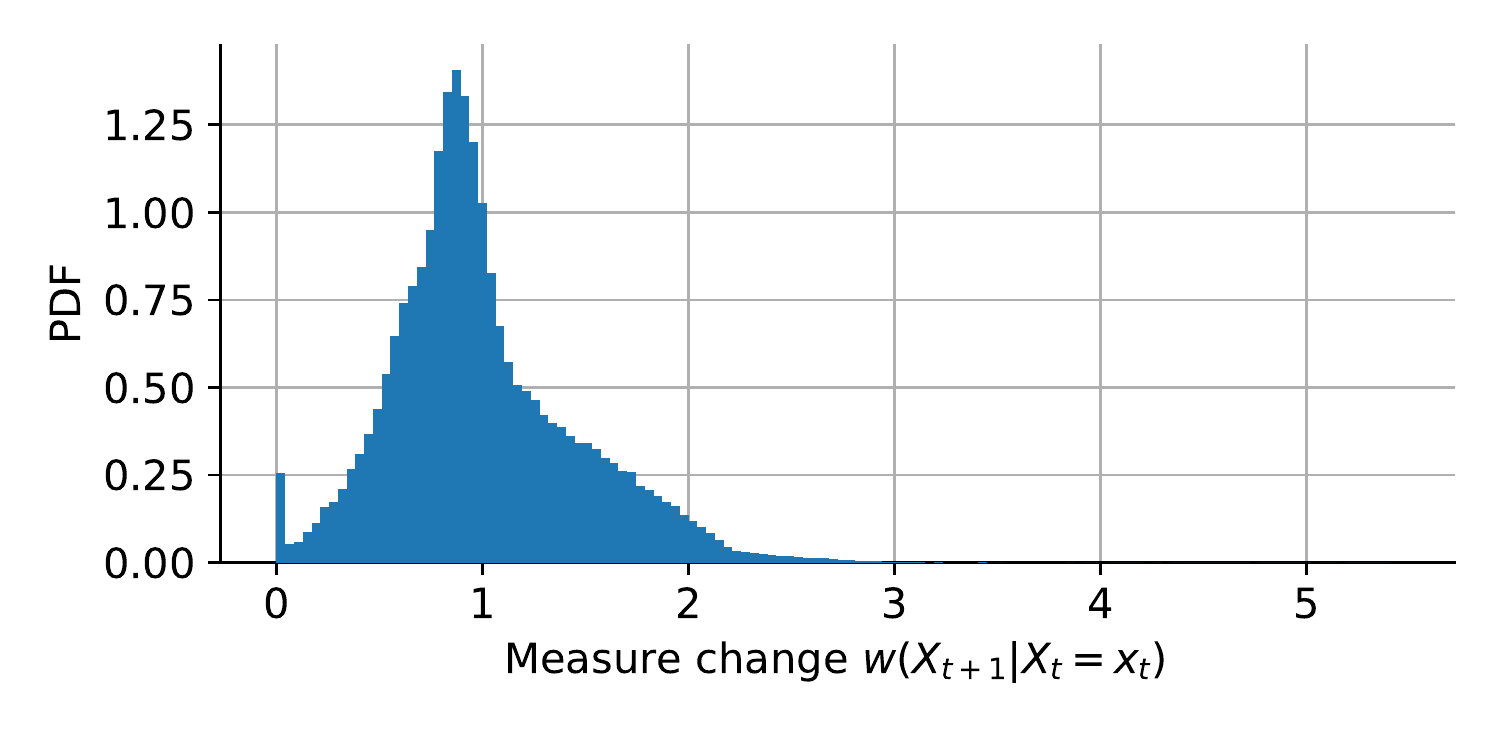}
    \caption{Distribution of the change of measure $w(X_{t+1} | X_t = x_t)$ calibrated on the considered three instruments.}
    \label{fig:measure_change}
\end{figure}

\paragraph{Evaluating the drift}
We test the risk-neutrality of the simulator in two ways. First, we generate a large number $N= 262.144 (=2^{18})$ of samples from both $\pdens_\eta(\cdot|x_t)$ and $\qdens_\theta(\cdot|x_t)$ and compare the MC drift $1/N \sum_j x^{(j)}_{t+1} - x_t$ for all three instruments, between the two simulators. \autoref{table:drift} shows clearly the reduction in drift in the marginal distributions, in both absolute and relative terms.

\begin{table}
\centering
\resizebox{\columnwidth}{!}{
\begin{tabular}{lrrr}
\toprule
{} &      Spot &  Call (60, 1.0) &  Call (120, 1.0) \\
\midrule
$\P$ drift      &  0.000041 &     0.000923 &      0.000773 \\
$\Q$ drift     & -0.000019 &    -0.000001 &     -0.000009 \\
$\P$ drift (\%) &  0.0041 &     2.21 &      1.21 \\
$\Q$ drift (\%) &  0.0019 &     0.0027 &      0.0156 \\
Ratio                  &  2.16 &   827.08 &     82.75 \\
\bottomrule
\end{tabular}
}
\vspace{8pt}
\caption{Numerical drifts obtained under the real-world and near risk-neutral approximated conditional density.}
\label{table:drift}
\end{table}

Secondly, we check the joint, we check that there are no profitable trading opportunities under $\qdens_\theta$. Since $u(x) \leq x$ for all $x$, it follows that $\E(u(a\cdot dX)) \leq \E (a \cdot dX)$ and hence under a risk-neutral measure, the optimal action for the exponential utility is $a^*=0$. Figure \ref{fig:pnl_vs_spot} shows the distribution of PnL obtained by optimal actions against different spot shifts. Specifically we see how the optimal actions under $\pdens_\eta$ are able to profit from larger spot moves but make losses on small moves. This can be understood by comparing the Black-Scholes implied volatilities\footnote{Of course the market does not follow Black-Scholes dynamics, however it provides a useful and simple framework to understand the optimal trading action.} of the options with the realized volatility of the spot under $\pdens_\eta$. In this case, the options trade at implied volatilities of $21.3\%$ and $23.1\%$ whereas the realized volatility for the spot is $25.4\%$, making options cheap, and giving rise to a positive drift in the options - the optimal trading action being to buy options and hedge short with the spot. Under the risk-neutral density, this drift is corrected, as can be seen in the subtle reduction in the volatility in the spot returns demonstrated in \autoref{fig:spot_call_density}. The end result is that the optimal actions under $\qdens_\theta$ are effectively zero, with no PnL or utility obtainable.

\begin{figure}[htp]
    \centering 
    \includegraphics[width=0.5\textwidth]{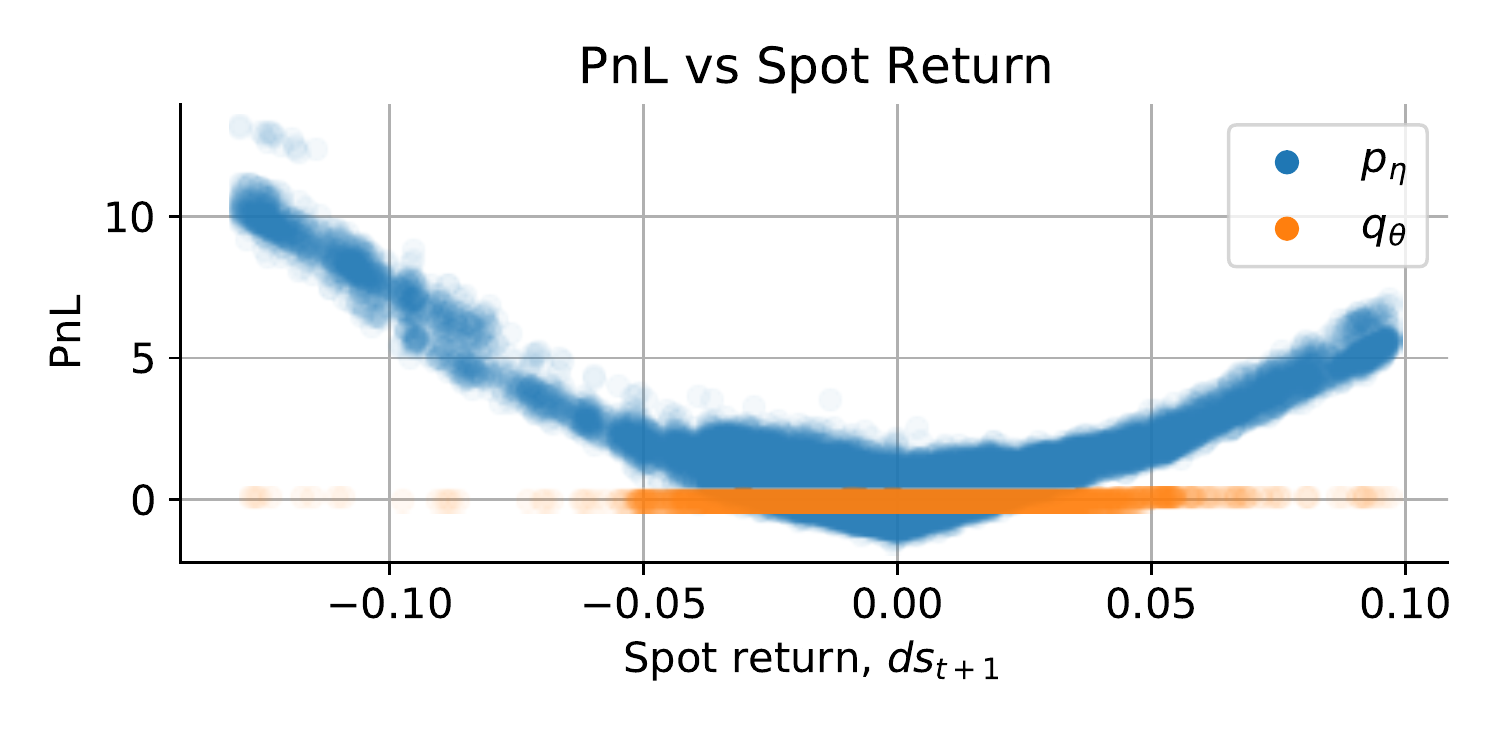}
    \caption{Scatter plot comparing the spot move $ds_{t+1}$ against the PnL $a^* \cdot dX_{t+1}$ under the approximated real-world density $p_\eta$ and the approximated \emph{near} risk-neutral density $q_\theta$.}
    \label{fig:pnl_vs_spot}
\end{figure}

\paragraph{Evaluating the conditional risk-neutral density in the compressed space}
We evaluated the conditional approximated real-world and risk-neutral density $ p_\eta(\cdot | y_{t})$ and  $ q_\theta(\cdot | y_{t})$ by applying a kernel estimator on the generated samples in the compressed space (see \autoref{fig:contour_plot}). Visually, we can observe that the change of measure alters the tails of the distribution. In particular, the tails of the spot distribution become narrower which resulted in a clear reduction in utility and PnL of the optimal volatility spread strategy $a^*$ (see \autoref{fig:pnl_vs_spot}). 

\section{Conclusion}
In this paper, we have outlined a general framework for data-driven simulation of financial market data from a risk-neutral measure, by combining a calibrated market simulator for the physical measure $\P$, and a change of measure found by solving a convex optimization on the physical simualator. We then train a simulator for the risk-neutral measure $\Q$ using a weighted negative log likelihood objective.

We demonstrate the effectiveness of the method by using neural spline flows to estimate construct a simulator for markets of equity underlying and options and show that the resulting simulator is indeed risk-neutral, while still retaining relevant properties of the historical data.

We leave it as future work to construct the full risk-neutral market simulator from which one can sample and expand the set of tradable instruments to wider floating grids of relative strikes and maturities. Another direction of research is the application of manifold flows to guarantee the injectivity of the decoder by construction \cite{brehmer2020flows}.

\clearpage
\bibliography{references}

\section{Disclosure}
Opinions and estimates constitute our judgement as of the date of this Material, are for informational purposes only and are subject to change without notice. It is not a research report and is not intended as such. Past performance is not indicative of future results. This Material is not the product of J.P. Morgan’s Research Department and therefore, has not been prepared in accordance with legal requirements to promote the independence of research, including but not limited to, the prohibition on the dealing ahead of the dissemination of investment research. This Material is not intended as research, a recommendation, advice, offer or solicitation for the purchase or sale of any financial product or service, or to be used in any way for evaluating the merits of participating in any transaction. Please consult your own advisors regarding legal, tax, accounting or any other aspects including suitability implications for your particular circumstances. J.P. Morgan disclaims any responsibility or liability whatsoever for the quality, accuracy or completeness of the information herein, and for any reliance on, or use of this material in any way. Important disclosures at: \href{www.jpmorgan.com/disclosures}{\texttt{www.jpmorgan.com/disclosures}}.

\appendix 
\onecolumn
\section{Additional numerical results}
\begin{figure*}[htp]
    \centering 
    \includegraphics[width=\textwidth]{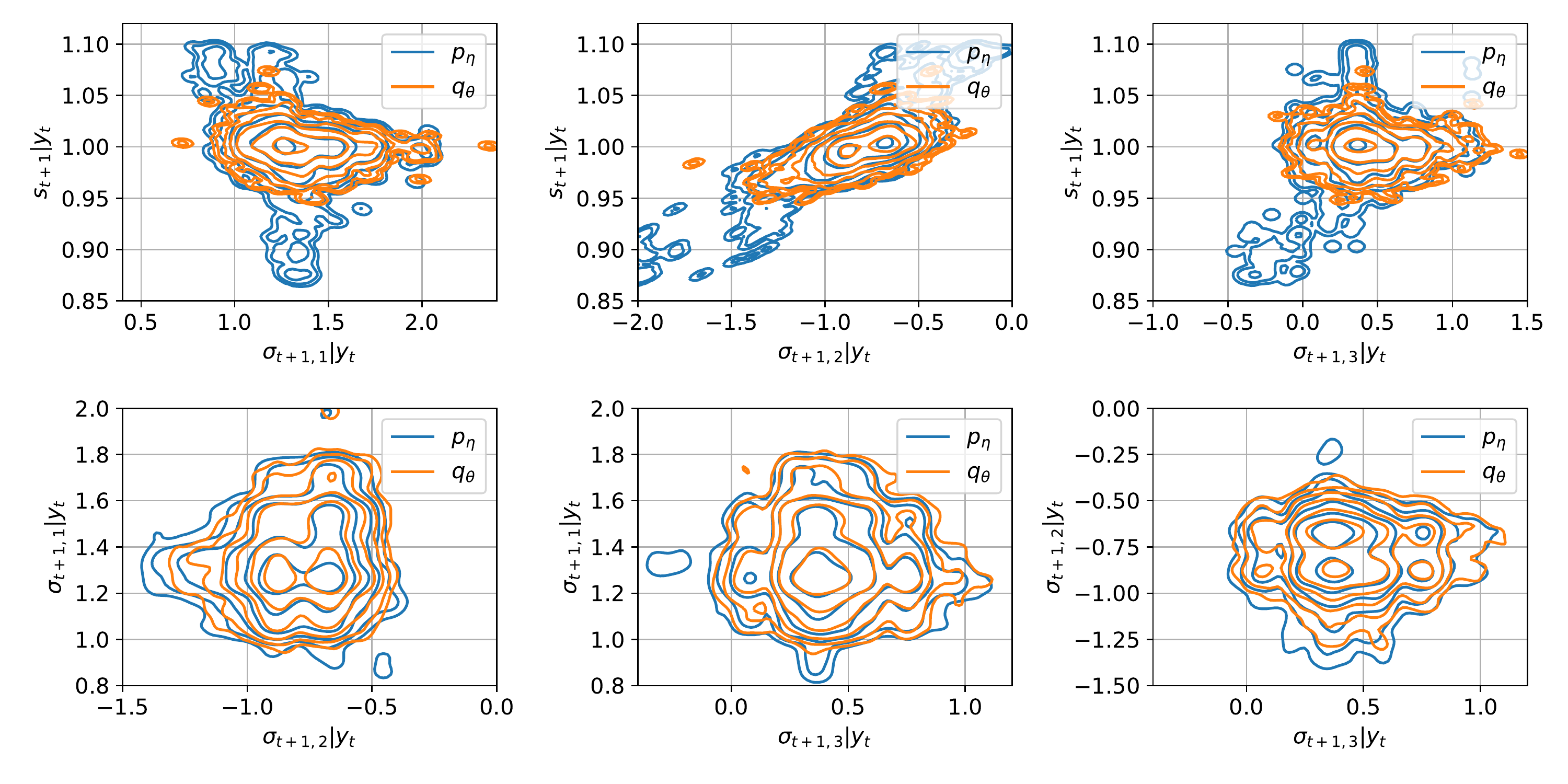}
    \caption{Kernel density estimate (contour plot) of the conditional spot and compressed state distribution $Y_{t+1} | Y_t = y_t$ sampled under the approximated real-world density $p_\eta$ and the approximated \emph{near} risk-neutral density $q_\theta$.}
    \label{fig:contour_plot}
\end{figure*}

\end{document}